\renewcommand{\algorithmiccomment}[1]{\bgroup\hfill\scriptsize//~#1\egroup}
\newenvironment{proof}[1][Proof]{\begin{trivlist}
\item[\hskip \labelsep {\bfseries #1}]}{\end{trivlist}}
\newcommand{\qed}{\nobreak \ifvmode \relax \else
      \ifdim\lastskip<1.5em \hskip-\lastskip
      \hskip1.5em plus0em minus0.5em \fi \nobreak
      \vrule height0.75em width0.5em depth0.25em\fi}
\newtheorem{theorem}{Theorem}[section]
\newtheorem{lemma}[theorem]{Lemma}
\newtheorem{proposition}[theorem]{Proposition}
\newtheorem{corollary}[theorem]{Corollary}
\newtheorem{definition}[theorem]{Definition}
\DeclareMathOperator*{\E}{\mathbb{E}}
\newcommand{\client}{client}
\newcommand{\enode}{encoding node}
\newcommand{\Kmax}{{K_\mathrm{max}}}
\newcommand{\node}{node}
\newcommand{\object}{object}
\newcommand{\dobjsz}{object size}
\newcommand{\ssize}{symbol size}
\newcommand{\ldn}{Liquid Data Networking}
\newcommand{\LDN}{LDN}
\newcommand{\sobject}{stream object}
\newcommand{\Sobject}{Stream object}
\newcommand{\SOPI}{SOPI}
\newcommand{\edata}{encoded data}
\newcommand{\fcode}{fountain code}
\newcommand{\ecode}{erasure code}
\newcommand{\fencoding}{fountain encoding}
\newcommand{\fdecoding}{fountain decoding}
\begin{document}
\pagenumbering{gobble}

\title{\SOPI\ design and analysis for \LDN}
\thispagestyle{fancy}

\author{Michael~Luby\thanks{The research described in this paper is supported by NSF EAGER 1936572.}}
\affil{ICSI and BitRipple, Inc.}

\maketitle

\begin{abstract}
\emph{\ldn\ (\LDN)}, introduced in \cite{byersluby2020ldn}, 
is an ICN architecture that is designed to enable 
the benefits of erasure-code enabled object delivery.
A primary contribution of \cite{byersluby2020ldn}
is the introduction of \SOPI s, which enables
\client s to concurrently download \edata\ for the same \object\ 
from multiple edge \node s, optimizes caching efficiency, and enables seamless mobility.
This paper provides an enhanced design and
analysis of \SOPI s.
\end{abstract}

\IEEEpeerreviewmaketitle


\section{Introduction}

The paper~\cite{byersluby2020ldn}, inspired by 
\emph{\fcode s}~\cite{dfapproach},
\cite{RFC6330}, \cite{RaptorQpap}, \cite{codornices}, introduces
{\em \ldn\ (\LDN)}, an ICN architecture that is designed to enable the benefits of 
erasure-code enabled \object\ delivery.
A primary contribution of \cite{byersluby2020ldn}
is the introduction of \emph{\sobject\ permutation identifiers (\SOPI s)}, 
which provides a simple and efficient download coordination mechanism:
\client s can concurrently download \edata\ for the same \object\ from multiple edge \node s,
caching efficiency is optimized, and seamless mobility is enabled.
This paper provides an enhanced design and analysis of \SOPI s, and inherits the
terminology and notation of~\cite{byersluby2020ldn}.

\section{\SOPI\ and \Sobject}
\label{stream sec}

\SOPI s and \emph{\sobject s} are fundamental to the design of \LDN:
they enable a diversity of \edata\ to be available
for download within the network for each \object, 
while at the same time ensuring that different \client s
request the same \edata\ for an \object\ 
from the same neighboring \enode.
The \SOPI\ design allows the \client\ decision 
of which \edata\ to request for an \object\ to be simple, robust, and efficient.

An \object\ is partitioned into $K$ source symbols, where $K$ is the \dobjsz\ divided 
by the \ssize, and the \ssize\ is typically chosen to fit into a packet payload.
\Sobject s can be described in terms of an \ecode\ with
optimal recovery properties and with the property 
that $N$ \edata\ symbols can be generated from any \object, 
where $N$ is a large prime number, and in particular
$N >> \Kmax$, where $\Kmax$ is the maximium
number of source symbols in any \object.

A \sobject\ for an \object\ consists of all of the possible \edata\ 
that can be generated for the \object\ in a specified order.  
The essential idea is that different \sobject s specify completely 
different orderings of the available \edata\ for an \object, 
and thus a \client\ can simply request prefixes of different 
\sobject s for an \object\ to receive non-overlapping 
\edata.

A \emph{\sobject\ permutation identifier (\SOPI)} specifies the 
ordering of \edata\ symbols for a \sobject.
A \SOPI\  $P$ identifies a permutation $\pi(P)$ of the 
$N$ available symbol IDs.
Thus, a \sobject\ identifier $(D,P)$ is the combination of the identifier $D$ 
of the \object\ from which it is generated and a \SOPI\ $P$.
We consider \SOPI s of the form $P=(A,B)$, 
where $A \in \{0, 1, 2, \ldots, N-1\}$,  
$B \in \{1, 2, 3, \ldots, N-1\}$.
Then, $P=(A,B)$ defines the permutation of symbol IDs
\[ \pi(P) = \{A, A+B, A+2 \cdot B, \ldots, A+(N-1) \cdot B \}, \]
where each term is taken mod $N$, i.e.,  
\[ \pi(P)[i] = A + i \cdot B \mod N \]
for any position $i \in \{ 0,\ldots,N-1\}$.

\subsection{RaptorQ \SOPI\ implementation}
\label{raptorq sopi sec}

For the implementation of the RaptorQ code~\cite{RFC6330} described at~\cite{codornices}, there are $2^{31}$ 
possible symbols of \edata\ for an \object.
Since $2^{31}-1$ is a prime number (a Mersenne prime),
$N=2^{31}-1$ can be used for the implementation~\cite{codornices} 
of the RaptorQ code.

Since $N$ is one less than a power of two, it is 
straightforward to compute $C = A + i \cdot B \mod N$, e.g.,
\begin{itemize}
    \item Compute $D = i \cdot B$.
    \item Let $D_0$ be the $31$ least significant bits of $D$.
    \item Let $D_1$ be the next $31$ bits of $D$.
    \item Let $C = A + D_0 + D_1$.
    \item If $C > N$ then $C  = C - N$.
    \item If $C > N$ then $C  = C - N$.
\end{itemize}

\section{Random \SOPI\ sets} 
\label{random analysis sec}

We analyze the properties of collections of \SOPI s,
where each \SOPI\ in the collection is randomly and
independently chosen.

\begin{proposition}
\label{pairwise prop}
For any pair of positions \[i_0 \in \{0,1,\ldots,N-1\}, i_1 \in \{0,1,\ldots,N-1\} - \{i_0\},\] 
and for any pair of symbol IDs 
\[j_0 \in \{0,1,\ldots,N-1\}, j_1 \in \{0,1,\ldots,N-1\} - \{j_0\},\]
there is a unique $P=(A,B)$
such that \[\pi(P)[i_0] = j_0 \mbox{ and } \pi(P)[i_1] = j_1.\]
\qed 
\end{proposition}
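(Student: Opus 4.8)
The plan is to recognize the two defining conditions as a linear system in the unknowns $A$ and $B$ over the ring $\mathbb{Z}/N\mathbb{Z}$, which is a field precisely because $N$ is prime. Using the formula $\pi(P)[i] = A + i \cdot B \bmod N$ from the statement, the requirements $\pi(P)[i_0] = j_0$ and $\pi(P)[i_1] = j_1$ become
\begin{align}
A + i_0 \cdot B &\equiv j_0 \pmod{N}, \\
A + i_1 \cdot B &\equiv j_1 \pmod{N}.
\end{align}
I would first subtract the first congruence from the second to eliminate $A$, obtaining $(i_1 - i_0) \cdot B \equiv j_1 - j_0 \pmod{N}$.

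The key step is inverting the coefficient $i_1 - i_0$. Because $i_0$ and $i_1$ are distinct elements of $\{0,1,\ldots,N-1\}$, their difference is a nonzero residue mod $N$, and since $N$ is prime it is invertible. Hence $B \equiv (i_1 - i_0)^{-1} (j_1 - j_0) \pmod{N}$ is determined uniquely as a residue, and back-substitution gives $A \equiv j_0 - i_0 \cdot B \pmod{N}$ uniquely as well. Equivalently, the coefficient matrix $\left(\begin{smallmatrix} 1 & i_0 \\ 1 & i_1 \end{smallmatrix}\right)$ has determinant $i_1 - i_0$, a unit in $\mathbb{Z}/N\mathbb{Z}$, so the system is invertible and has exactly one solution.

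The one point that needs care — and which I expect to be the main obstacle — is confirming that this unique solution actually respects the prescribed ranges, in particular that $B \in \{1,\ldots,N-1\}$ rather than merely $B \in \{0,\ldots,N-1\}$. This is where the hypothesis $j_0 \neq j_1$ enters: if $B$ were $0$, the two congruences would force $j_0 \equiv A \equiv j_1 \pmod{N}$, contradicting $j_0 \neq j_1$. Thus $B \neq 0$, so $B$ lies in $\{1,\ldots,N-1\}$, while $A$ automatically lies in $\{0,\ldots,N-1\}$ as a residue. Conversely, any admissible $P = (A,B)$ meeting both conditions must solve the same system, so uniqueness over the field yields uniqueness of $P$, completing the argument.
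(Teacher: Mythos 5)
Your proof is correct: the paper states this proposition without proof (the \qed follows the statement directly, treating it as immediate), and your argument --- solving the $2\times 2$ linear system over the field $\mathbb{Z}/N\mathbb{Z}$ using primality of $N$, and observing that $j_0 \neq j_1$ forces $B \neq 0$ so that $B$ lands in the required range $\{1,\ldots,N-1\}$ --- is exactly the standard justification the paper implicitly relies on. Nothing is missing.
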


Proposition~\ref{pairwise prop} implies that the pair
of symbol IDs in any pair of positions within 
the permutation are random with respect to a random \SOPI\ $P=(A,B)$.

Suppose an \object\ $D$ is composed of a single source block
with $K$ source symbols.
A \client\ will download \edata\ from prefixes of multiple
\sobject s of an \object\ $D$ in order to recover the \object.
We would like to show that a \client\ doesn't receive many
duplicate symbols when downloading from multiple \sobject s
with randomly chosen \SOPI s.  Suppose the \client\ downloads
from some number $s$ of different \sobject s, with corresponding
randomly chosen
\SOPI s $P_0, P_1, \ldots, P_{s-1}$.  Suppose the 
\client\ receives $M_0$ symbols from the \sobject\ $(P_0,D)$,
$M_1$ symbols from the \sobject\ $(P_1,D)$, etc.  where
\[ \sum_{i=0}^{s-1} M_i = M. \]

The expected number of distinct symbol IDs among $M$ is at least
\[M -\frac{(M-1)^2}{2 \cdot N},\] 
and thus if $K^2 \le 2 \cdot N$ then 
\[M =K \cdot \left(1+ \frac{K}{2 \cdot N}\right) \le K+1\] 
symbols on average
are enough to receive $K$ symbols with at least $K$ distinct symbol IDs.  
Theorem~\ref{dup theorem} below provides bounds on the probability 
that $M$ received symbols have at least $K$ distinct symbol IDs.

\begin{theorem}
\label{dup theorem}
Consider an \object\ $D$ composed of a single source block with $K$ source symbols. 
Suppose at least \[M = \frac{K}{1-\delta}\] symbols have been received in total 
from $s \ge 1$ \sobject s\ 
\[(P_0,D),(P_1,D),\ldots, (P_{s-1},D),\] 
where $\delta > 0$ and $M^2 \le 2 \cdot N$.
Then, 
\begin{equation}
\label{theorem eq}
\Pr[ D \mbox{ not recoverable from the $M$ symbols} ] \le \frac{1}{\delta^2 \cdot N}
\end{equation}
with respect to the random variables 
\[ P_0 = (A_0,B_0) ,P_1 = (A_1,B_1),\ldots, P_{s-1} = (A_{s-1},B_{s-1}). \]
\end{theorem}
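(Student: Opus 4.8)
The plan is to reduce recoverability to a counting question and then control it with a second-moment (Chebyshev) argument. Since the code has optimal recovery properties, the \object\ $D$ is recoverable from the $M$ received symbols if and only if those symbols carry at least $K$ distinct symbol IDs. Writing $X$ for the number of distinct symbol IDs among the $M$ received symbols and $Y = M - X$ for the number of duplicates, the failure event is exactly $\{X \le K-1\} = \{Y \ge M-K+1\}$. Because $M = K/(1-\delta)$ gives $M - K = \delta M$, the goal becomes $\Pr[Y \ge \delta M + 1] \le 1/(\delta^2 N)$.

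First I would pass from $Y$ to a more tractable statistic. Each symbol ID appears at most once within any single \sobject\ (a \SOPI\ induces a permutation), so every coincidence of IDs is a collision between symbols drawn from two different \sobject s. Let $Z$ be the number of cross-\sobject\ pairs of received symbols that share a symbol ID. If an ID occurs $c$ times it contributes $c-1$ to $Y$ and $\binom{c}{2}$ to $Z$, and since $c-1 \le \binom{c}{2}$ we get $Y \le Z$, hence $\Pr[Y \ge \delta M + 1] \le \Pr[Z \ge \delta M + 1]$. Writing $Z = \sum I_{uv}$ over cross-\sobject\ symbol pairs $\{u,v\}$, where $I_{uv}$ indicates an ID coincidence, each individual symbol ID $\pi(P)[i]$ is uniform on $\{0,\ldots,N-1\}$ and the two \SOPI s are independent, so $\E[I_{uv}] = 1/N$ and $\E[Z] \le \binom{M}{2}/N \le M^2/(2N) \le 1$, the last step using $M^2 \le 2\cdot N$.

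The heart of the argument is bounding $\operatorname{Var}(Z)$. I would expand $\operatorname{Var}(Z) = \sum \operatorname{Var}(I_{uv}) + \sum \operatorname{Cov}(I_{uv},I_{u'v'})$ and classify each pair of indicators by how the underlying \sobject s and positions overlap. The key observation is that any two indicators together involve at most two positions inside any single \sobject, so Proposition~\ref{pairwise prop} (pairwise uniformity of the IDs at two positions) is exactly strong enough to evaluate every joint probability. A short case check then shows that all configurations are uncorrelated or negatively correlated except one: when both pairs lie in the same two \sobject s and use four distinct positions, in which case the excess covariance equals $1/(N^2(N-1))$. Counting these configurations as $O(M^4)$ and invoking $M^2 \le 2\cdot N$ bounds their total contribution by $O(1/N)$, giving $\operatorname{Var}(Z) \le \E[Z] + O(1/N) \le M^2/(2N) + O(1/N)$.

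Finally I would apply the one-sided Chebyshev inequality. Since $\E[Z] \le 1$, the deviation satisfies $\delta M + 1 - \E[Z] \ge \delta M$, so $\Pr[Z \ge \delta M + 1] \le \operatorname{Var}(Z)/(\delta M)^2 \le (M^2/(2N) + O(1/N))/(\delta^2 M^2)$, which collapses to at most $1/(\delta^2 N)$ once $M^2 \le 2\cdot N$ is used. The main obstacle is the covariance bookkeeping in the third step: correctly enumerating the overlap patterns, verifying that the only positively correlated pattern is the ``same two \sobject s, four positions'' one, and confirming that the hypothesis $M^2 \le 2\cdot N$ keeps that term small enough to preserve the clean $1/(\delta^2 N)$ bound rather than a weaker constant.
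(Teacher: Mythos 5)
Your proposal is correct and follows essentially the same route as the paper: both reduce failure to having fewer than $K$ distinct symbol IDs, dominate the duplicate count by the number of colliding cross-\sobject\ pairs, and control that sum by a second-moment computation whose case analysis (including identifying the ``same two \sobject s, four distinct positions'' configuration as the only positively correlated one, via Proposition~\ref{pairwise prop}) matches the paper's. The only difference is that you apply Chebyshev to the centered variable where the paper applies Markov to the square, a cosmetic refinement that does not change the argument.
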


\begin{proof}
Let $Y$ be the random variable that is the number 
of unique symbol IDs among $M$ 
received symbols for an \object\ $D$ with respect to the random variables 
\[ P_0 = (A_0,B_0) ,P_1 = (A_1,B_1),\ldots, P_{s-1} = (A_{s-1},B_{s-1}). \]
The left-hand side of Inequality~(\ref{theorem eq}) is equivalent to 
\begin{equation}
  \Pr[ Y < K]  = \Pr[ Y < (1-\delta) \cdot M]
\end{equation}
and thus we need to prove that
\begin{equation}
\label{toprove eqn}
    \Pr[Y < (1-\delta) \cdot M] \le \frac{1}{\delta^2 \cdot N}.
\end{equation}

Each of the $M$ received symbols has a symbol position within the \sobject\ from which it is received, 
and a symbol ID determined by the \sobject\ \SOPI\ and the symbol position within the \sobject.
We associate a unique index within $\{0, 1, \ldots, M-1 \}$ with each of 
the $M$ (\sobject, symbol position) pairs of received symbols.  
For $i \in \{0, 1, \ldots, M-1 \}, j \in \{0, 1, \ldots, M-1 \} - \{i\}$,
we let $X_{i,j} = 1$ if the symbol position within the \sobject\ indexed by $i$ is mapped to
the same symbol ID as the symbol position within the \sobject\ indexed by $j$, 
and $X_{i,j} = 0$ if the symbol position within the \sobject\ indexed by $i$ is mapped to a
different symbol ID than the symbol position within the \sobject\ indexed by $j$.  
Thus,  $X_{i,j} = 1$ if and only if the symbol IDs mapped to by $i$ and $j$ are duplicates.
Then,
\begin{align}
  \Pr[Y < (1-\delta) \cdot M] & = \Pr[M-Y > \delta \cdot M] \nonumber \\
  & \le \Pr\left[\sum_{i,j > i} X_{i,j} > \delta \cdot M\right] \label{x2 eq} \\
  & = \Pr\left[\left(\sum_{i, j > i} X_{i,j}\right)^2 > \delta^2 \cdot M^2\right] \nonumber \\
  & \le \frac{\E\left[\left(\sum_{i,j > i} X_{i,j}\right)^2\right]}{\delta^2 \cdot M^2}, \label{markov eq}
\end{align}
where Inequality~(\ref{x2 eq}) follows since $M-Y$ is the number of duplicate symbol IDs
and if the symbol IDs mapped to by $i$ and $j$ are the same then $X_{i,j}=1$,
and Inequality~(\ref{markov eq}) follows from Markov's inequality.

Expanding out terms,
\begin{align}
  \E\left[\left(\sum_{i,j > i} X_{i,j}\right)^2\right]
  & = \E\left[\sum_{i,j > i} X_{i,j}^2 \right] \label{xxx eq}\\
  & +
  \E\left[ \sum_{i,j>i}\mathop{\sum_{i',j'>i'}}_{(i',j') \not= (i,j)} X_{i,j} \cdot X_{i',j'} \right] \label{xxxx eq}
\end{align}
If $i$ and $j$ index different symbol positions within the same \sobject\ then $\E[X^2_{i,j}]=0$
since the \SOPI\ for the \sobject\ defines a permutation of the symbol IDs 
and thus cannot map $i$ and $j$  to the same symbol ID.  
If $i$ and $j$ index symbol positions within two different \sobject s
then $\E[X^2_{i,j}]=\frac{1}{N}$ since the \SOPI s for the two 
different \sobject s are chosen independently.  
Thus,
\begin{equation}
\label{right xxx eq}
\E\left[\sum_{i,j > i} X_{i,j}^2 \right]  \le \frac{\binom{M}{2}}{N} \le \frac{M^2}{2 \cdot N}.   
\end{equation}

Similarly, if $i$ and $j$ index different symbol positions within the same \sobject\ 
or $i'$ and $j'$ index different symbol positions within the same \sobject\ then 
$\E\left[X_{i,j} \cdot X_{i',j'} \right] = 0$.
If $i$ and $i'$ index different symbol positions within one \sobject\ 
and $j$ and $j'$ index different symbols positions within a second \sobject\ 
then $\E\left[X_{i,j} \cdot X_{i',j'} \right] = \frac{1}{(N-1) \cdot N}$ 
by Proposition~\ref{pairwise prop}. If $i$ and $i'$ index different symbol
positions within one \sobject\ and $j$ and $j'$ index the same symbol position
within a second \sobject\ then $\E\left[X_{i,j} \cdot X_{i',j'} \right] = 0$
since $i$ and $i'$ cannot map to the same symbol ID.
If $i$ and $i'$ index different symbol positions within one \sobject\ and $j$ indexes 
a symbol position in a second \sobject\ and $j'$ indexes a position within a third \sobject\ 
then $\E\left[X_{i,j} \cdot X_{i',j'} \right] = \frac{1}{N^2}$, since if $j$ and $j'$ map
to the same symbol ID then $i$ and $i'$ cannot both map to that symbol ID,
and if $j$ and $j'$ map to different symbol IDs (with probability $\frac{N-1}{N}$)
then $i$ and $j$ map to the same symbol ID and $i'$ and $j'$ map to the same
symbol ID with probability $\frac{1}{(N-1) \cdot N}$
from Proposition~\ref{pairwise prop}.
If $i$, $j$, $i'$ and $j'$ index symbol positions in four different \sobject s
then $\E\left[X_{i,j} \cdot X_{i',j'} \right] = \frac{1}{N^2}$, since each of
the four \SOPI s are chosen randomly and independently of one another.
All other cases are variants of these cases.  Thus,
\begin{equation}
\label{xxxxxx eq}
\E\left[ \sum_{i,j>i}\mathop{\sum_{i',j'>i'}}_{(i',j') \not= (i,j)} X_{i,j} \cdot X_{i',j'} \right]
\le \frac{\binom{M}{2}^2}{(N-1) \cdot N} \le \frac{M^4}{4 \cdot N^2}.   
\end{equation}
From Inequalities~(\ref{right xxx eq}) and~(\ref{xxxxxx eq}) it follows that
\begin{align}
    \E\left[\left(\sum_{i,j > i} X_{i,j}\right)^2\right]  & \le 
    \frac{M^2}{2 \cdot N} \cdot \left(1 + \frac{M^2}{2 \cdot N}\right) \\
    & \le \frac{M^2}{N}, \label{sqrtM eq}
\end{align}
where Inequality~(\ref{sqrtM eq}) follows since $M^2 \le 2 \cdot N$.
Combining Inequality~(\ref{sqrtM eq}) with Inequality~(\ref{markov eq})
proves Inequality~(\ref{toprove eqn}).
\qed
\end{proof}

\subsection{Random \SOPI\ set examples}
\label{raptorq random sec}
For the RaptorQ code specified in~\cite{RFC6330}, the maximum supported 
number of source symbols per source block is $\Kmax = 56,403$, and $N = 2^{31}-1$ 
is a good choice as described in Section~\ref{raptorq sopi sec}.
With $M=65,535$, the condition $M^2 \le 2 \cdot N$ is satisfied, 
and $\Kmax \le 0.86 \cdot M = (1-\delta) \cdot M$ for $\delta = 0.14$.  
Thus, Theorem~\ref{dup theorem} holds for RaptorQ
for all supported source block sizes and for any $\delta \le 0.14$.

On average at most a $0.00002$ fraction of the symbol IDs of 
received \edata\ symbols from prefixes of \sobject s will be duplicates
when the total number of received symbols is up to $M$.  However, 
stronger bounds on the probability of the number of duplicates being
above a certain bound are of importance,

For example, setting $\delta = 0.01$, Theorem~\ref{dup theorem} shows that
an \object\ $D$ composed of a single source block of $K$ source symbols
can be recovered with probability at least 
\[ 1-\frac{1}{\delta^2 \cdot N} = 0.999995\]
from $M = K/0.99 \approx 1.01 \cdot K$ received symbols in total
from different \sobject s.

As another example, setting $\delta = 0.1$, Theorem~\ref{dup theorem} shows that
an \object\ $D$ composed of a single source block of $K$ source symbols
can be recovered with probability at least 
\[ 1-\frac{1}{\delta^2 \cdot N} = 0.99999995\]
from $M = K/0.9 \approx 1.11 \cdot K$ received symbols in total
from different \sobject s.

The analysis of Theorem~\ref{dup theorem}
is not tight, and thus these bounds are conservative.

\section{Designed \SOPI\ sets}

Although the analysis in Section~\ref{random analysis sec} shows there is
little chance of there being a significant fraction of duplicates among
received symbols from multiple \sobject s, there is still a chance that
there is a significant fraction of duplicates. As an extreme example,
if $P_0$ and $P_1$ are identical then the overlap between prefixes
of length $K/2$ is $K/2$, i.e., all $K/2$ symbols are duplicates.
As a somewhat less extreme example, if $P_0 = (0,1)$ 
and $P_1 = (A,1)$ where $A$ is randomly chosen, then for prefixes of
length $K/2$ there is a $1-K/N$ chance there is no overlap between the
prefixes, but with the remaining probability $K/N$ there are on average $K/4$ duplicates.

In this section, we provide a deterministic design of a large set 
$\mathcal{P}$ of \SOPI s for which we can provide guaranteed 
upper bounds on the number of symbol ID duplicates there are 
in prefixes of permutations defined by \SOPI s from $\mathcal{P}$. 

The design has two parts:
\begin{itemize}
    \item Design a set $\mathcal{B} \subset \{1,\ldots,N-1\}$.
    \item For each $B \in \mathcal{B}$, design a set $\mathcal{A}_B \subset \{0,\ldots,N-1\}$. 
\end{itemize}
Then, the designed set of \SOPI s is the set 
\[ \mathcal{P} = \{P = (A,B): B \in \mathcal{B}, A \in \mathcal{A}_B \}. \]

\subsection{Interactions between \emph{B} values}

The design of set $\mathcal{B}$ is based on interactions between different $B$ values. 
The following is at the heart of this analysis.

\begin{definition}
\label{matches defn}
Let $B_0, B_1 \in \{1,\ldots,N\}$ such that $B_0 \not= B_1$.
For any pair of integers $(d_0,d_1)$,
we say $(d_0,d_1)$ \emph{matches with respect to $(B_0,B_1)$}
if 
\[d_0 \cdot B_0 \mod N = d_1 \cdot B_1 \mod N. \]
\qed
\end{definition}
The importance of Definition~\ref{matches defn} is that,
for any $A_0, A_1 \in \{0,\ldots,N\}$,
if the symbol ID at position $p_0$ in the permutation defined by
$P_0 = (A_0,B_0)$ is equal to the symbol ID at position $p_1$ 
in the permutation defined by $P_1 = (A_1,B_1)$
and $(d_0,d_1)$ matches with respect to $(B_0,B_1)$, 
then the symbol ID at position $p_0 + d_0 \mod N$ 
with respect to $P_0$ is equal to the  
symbol ID at position $p_1 + d_1 \mod N$ with respect to $P_1$.

It is easy to verify that if $(d_0,d_1)$ matches 
with respect to $(B_0,B_1)$ and $(d'_0,d'_1)$ matches 
with respect to $(B_0,B_1)$ then
\[ (d_0,d_1) + (d'_0,d'_1) = (d_0+d'_0,d_1+d'_1) \]
matches with respect to $(B_0,B_1)$. 
For example, if $(d_0,d_1)$ matches 
with respect to $(B_0,B_1)$ then, for any integer $i$,
\[ i \cdot (d_0,d_1) = (i \cdot d_0, i \cdot d_1) \]
matches with respect to $(B_0,B_1)$.

Let $M$ be an upper bound on the aggregate number of symbols
downloaded by a \client\ from prefixes of \sobject s 
for an \object. We impose the condition that
\begin{equation}
    \label{M upper eq}
    M^2 < N/2.
\end{equation}

\begin{definition}
\label{M set defn}
The set $D$ is defined as
\[ D = \{-M+1,\ldots, -1 \} \cup \{ 1,\ldots, M-1 \}.\]
\qed
\end{definition}

\begin{lemma}
\label{main distance lemma}
For any $B_0 \in \{1,\ldots,N\}$ and $B_1 \in \{1,\ldots,N\}$, one of the following 
two possibilities holds:
\begin{enumerate}
    \item \label{large dist eq}
    For all $d_0 \in D$, $d_1 \in D$, 
    $(d_0, d_1)$ does not match with respect to $(B_0, B_1)$.
The distance between $B_0$ and $B_1$ is defined to be $2 \cdot M$.
    \item \label{small dist eq}
    There is a $d_0 \in D$, $d_1 \in D$ such that
    $(d_0, d_1)$ matches with respect to $(B_0, B_1)$
    and, for any $d'_0 \in D$ and $d'_1 \in D$ such 
    that $(d'_0, d'_1)$ matches with respect to $(B_0,B_1)$,
    $(d'_0, d'_1)$ is an integer multiple of $(d_0,d_1)$.
The distance between $B_0$ and $B_1$ is defined to be $|d_0| + |d_1|$. 
\end{enumerate}

\end{lemma}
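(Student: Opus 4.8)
The plan is to view the collection of all matching pairs as a subgroup of $\mathbb{Z}^2$ and to exploit the smallness hypothesis $M^2 < N/2$ to show that this subgroup can meet the box $D \times D$ only along a single line through the origin. Accordingly, set
\[ L = \{ (d_0,d_1) \in \mathbb{Z}^2 : d_0 \cdot B_0 \equiv d_1 \cdot B_1 \pmod N \}, \]
the set of all integer pairs matching with respect to $(B_0,B_1)$; the additive closure property already observed in the text shows that $L$ is a subgroup of $\mathbb{Z}^2$. Possibility~\ref{large dist eq} is exactly the assertion $L \cap (D \times D) = \emptyset$, so I may assume this intersection is nonempty and must then exhibit the generator demanded by possibility~\ref{small dist eq}.

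The crux is to prove that any two matching pairs inside the box are parallel. Given $(d_0,d_1)$ and $(d_0',d_1')$ in $L \cap (D \times D)$, I would form the determinant $\Delta = d_0 \cdot d_1' - d_0' \cdot d_1$ and, by multiplying the two matching congruences by $d_1'$ and $d_1$ (respectively by $d_0'$ and $d_0$) and subtracting, derive $\Delta \cdot B_0 \equiv 0 \pmod N$ and $\Delta \cdot B_1 \equiv 0 \pmod N$. Since $N$ is prime, each $B_i \in \{1,\ldots,N\}$ is either coprime to $N$ or equal to $N$; and if $B_0 = N$ (or $B_1 = N$, using $B_0 \neq B_1$ as in Definition~\ref{matches defn}) the matching condition forces a zero coordinate and so admits no pair in $D \times D$, returning us to possibility~\ref{large dist eq}. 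In the remaining case at least one $B_i$ is coprime to $N$, whence $N \mid \Delta$. The main obstacle is closed here by the size bound: every coordinate has absolute value at most $M-1$, so $|\Delta| \le 2 \cdot (M-1)^2 < 2 \cdot M^2 < N$ by~(\ref{M upper eq}), and therefore $\Delta = 0$. Thus the two pairs are linearly dependent, i.e., all matching pairs in $D \times D$ lie on one line $\ell$ through the origin.

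It remains to extract the minimal generator. The intersection $L \cap \ell$ is a cyclic subgroup of the integer points on $\ell$, hence generated by a single vector $(d_0,d_1) \in L$. Because the nonempty set $L \cap (D \times D)$ consists of multiples of this generator and contains a pair with both coordinates nonzero, the generator itself has both coordinates nonzero and satisfies $|d_0| \le M-1$ and $|d_1| \le M-1$, so $(d_0,d_1) \in D \times D$. Every matching pair $(d_0',d_1') \in D \times D$ is then an integer multiple of $(d_0,d_1)$, which is precisely possibility~\ref{small dist eq}. I expect the determinant-plus-size-bound step to be the decisive one; the rest is bookkeeping about cyclic subgroups of the $\mathbb{Z}$-points on $\ell$ and the handling of the degenerate $B_i = N$ boundary case.
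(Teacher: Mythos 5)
Your proof is correct, and its arithmetic core is the same as the paper's: both arguments reduce to showing that the determinant $\Delta = d_0 \cdot d'_1 - d'_0 \cdot d_1$ of two matching pairs satisfies $\Delta \equiv 0 \pmod N$ and then invoking $M^2 < N/2$ (Inequality~(\ref{M upper eq})) to force $\Delta = 0$, so that all matching pairs in $D \times D$ are collinear. Where you genuinely diverge is on either side of that step. The paper derives the congruence indirectly, by chasing equal symbol IDs through positions of the permutations $P_0$ and $P_1$ (which tacitly assumes the map $i \mapsto A_1 + i \cdot B_1 \bmod N$ is a permutation, i.e., $B_1 \not\equiv 0 \pmod N$), and then establishes integrality of the ratio $c$ by fixing a ``minimal'' matching pair and subtracting $\lfloor c \rfloor \cdot (d_0,d_1)$ to land back in $D \times D$; you instead obtain the congruence by directly cross-multiplying the two defining congruences of Definition~\ref{matches defn}, and you extract the generator from the observation that $L \cap \ell$ is a cyclic subgroup of $\mathbb{Z}^2$ whose generator must itself lie in $D \times D$ because some nonzero multiple does. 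Your route buys a cleaner, purely algebraic argument and explicitly disposes of the boundary case $B_i = N$ (where $B_i \equiv 0 \pmod N$), which the lemma's hypothesis $B_0, B_1 \in \{1,\ldots,N\}$ formally permits but the paper's proof silently ignores; the paper's route stays in the language of permutation positions, which is the form in which the result is consumed by Lemma~\ref{overlap lemma}, but nothing in your version is lost in that translation.
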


\begin{proof}
If (\ref{large dist eq}) holds the proof is complete. 
We need to show that if (\ref{large dist eq}) doesn't hold then (\ref{small dist eq}) holds.
If (\ref{large dist eq}) doesn't hold there is $d_0 \in D$ and $d_1 \in D$ such that
$(d_0,d_1)$ matches with respect to $(B_0,B_1)$ and $(d_0,d_1)$ is a minimal
pair, i.e., there is no $d'_0 \in D$ and $d'_1 \in D$ such that $(d'_0,d'_1)$ 
matches with respect to $(B_0,B_1)$ and $(d'_0,d'_1) = c \cdot (d_0,d_1)$,
where $0 < |c| < 1$.

Consider any $(d'_0,d'_1)$ with $d'_0 \in D$, $d'_1 \in D$
such that $(d'_0,d'_1)$ matches with respect to $(B_0,B_1)$.
We want to show that $(d'_0,d'_1)$ is an integer multiple of $(d_0,d_1)$.
Note that $(d'_0 \cdot d_0,d'_0 \cdot d_1)$ matches with respect to
$(B_0,B_1)$ and also $(d'_0 \cdot d_0, d'_1 \cdot d_0)$ matches with respect to
$(B_0,B_1)$. 

For any $A_0, A_1 \in \{0,\ldots,N\}$, 
consider a pair of positions $(p_0,p_1)$ where the symbol ID at
position $p_0$ of the permutation defined by $P_0 = (A_0,B_0)$ 
is the same as the symbol ID at position $p_1$ 
of the permutation defined by $P_1 = (A_1,B_1)$.
Then, the symbol ID at position 
$p_0 + d'_0 \cdot d_0 \mod N$ of $P_0$ is the same
as the symbol ID at position $p_1 + d'_0 \cdot d_1 \mod N$ 
of $P_1$ and also the same as the symbol ID 
at position $p_1 + d'_1 \cdot d_0 \mod N$ of $P_1$,
which implies that
\begin{equation}
\label{equal mod eqn}
p_1 + d'_0 \cdot d_1 = p_1 + d'_1 \cdot d_0 \mod N.
\end{equation}
Since 
\[-N/2 < -M^2 \le d'_0 \cdot d_1 \le M^2 < N/2\]
and
\[-N/2 < -M^2 \le d'_1 \cdot d_0 \le M^2 < N/2,\]
it follows that Equation~(\ref{equal mod eqn}) holds 
if and only if 
\[d'_0 \cdot d_1 = d'_1 \cdot d_0.\]
Thus, for some $c \not= 0$, 
\[(d'_0,d'_1) = c \cdot (d_0,d_1).\]
Since $(d_0,d_1)$ is a minimal pair, $|c| \ge 1$.
Then
\begin{align}
   (d''_0,d''_1) & = (d'_0, d'_1) 
- \lfloor c \rfloor \cdot (d_0, d_1) \nonumber \\
& = (c - \lfloor c \rfloor) \cdot (d_0, d_1) \nonumber
\end{align}
matches with respect to $(B_0,B_1)$,
and $d''_0 \in D$ and $d''_1 \in D$.
Since $0 \le c - \lfloor c \rfloor < 1$
and $(d_0,d_1)$ is a minimal pair,
it follows that $c - \lfloor c \rfloor = 0$ and thus (\ref{small dist eq}) holds
because $c$ is an integer.
\qed
\end{proof}

\begin{lemma}
\label{overlap lemma}
Suppose for $B_0, B_1 \in \{1,\ldots,N\}$
the distance between $B_0$ and $B_1$ is $d$.  
Then, for any $A_0, A_1 \in \{0,\ldots,N\}$
the number of distinct symbol IDs in the prefixes
of the permutations defined by \SOPI s 
\[ P_0 = (A_0,B_0), P_1 = (A_1,B_1),\] 
is at least \[ m - \left\lfloor \frac{m-2}{d} \right\rfloor-1, \]
where $m \le M$ is the total length of the pair of prefixes.
\end{lemma}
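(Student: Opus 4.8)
The plan is to count \emph{duplicate} symbol IDs rather than distinct ones. Since each of $P_0,P_1$ induces a permutation, within a single prefix all symbol IDs are distinct, so the only way a symbol ID fails to be distinct across the two prefixes is a \emph{collision}: a pair of positions $(p_0,p_1)$, one in each prefix, carrying the same symbol ID. If $t$ denotes the number of such collision pairs, then the number of distinct symbol IDs is exactly $m-t$, and it suffices to prove $t \le \lfloor (m-2)/d \rfloor + 1$.

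First I would set up coordinates, writing the two prefixes as position sets $\{0,\ldots,m_0-1\}$ and $\{0,\ldots,m_1-1\}$ with $m_0+m_1=m \le M$, so that any two positions within a prefix differ by strictly less than $M$ (hence no wraparound modulo $N$ occurs and position differences may be treated as honest integers). Fixing one collision $(p_0^*,p_1^*)$ — if none exists the bound is immediate — I would subtract the defining congruences of two collisions to show that the difference $(p_0-p_0^*,\,p_1-p_1^*)$ matches with respect to $(B_0,B_1)$; because each $P_i$ is a permutation, a nonzero such difference has both coordinates nonzero, hence both lie in $D$, so Lemma~\ref{main distance lemma} applies. When possibility~(\ref{small dist eq}) holds with minimal matching pair $(d_0,d_1)$, every collision difference is an integer multiple $k\cdot(d_0,d_1)$; conversely, the translation property stated just after Definition~\ref{matches defn} shows that every integer $k$ for which $(p_0^*+kd_0,\,p_1^*+kd_1)$ lands in the two position ranges is again a collision. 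Thus collisions are in bijection with an interval $K\subseteq\mathbb{Z}$ of consecutive integers, and $t=|K|$.

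The counting step is then short. Since $d_0,d_1\in D$ and $0\notin D$, we have $|d_0|,|d_1|\ge 1$, so spanning the interval $K$ of size $t$ costs $(t-1)|d_0| \le m_0-1$ in the first coordinate and $(t-1)|d_1| \le m_1-1$ in the second; adding these and using $d=|d_0|+|d_1|$ gives $(t-1)d \le m-2$, hence $t \le \lfloor(m-2)/d\rfloor+1$. The case where possibility~(\ref{large dist eq}) holds ($d=2M$) is handled separately but trivially: no nonzero matching pair exists in $D\times D$, so there is at most one collision, and since $m\le M$ forces $(m-2)/(2M)<1$ the floor term is $0$, consistent with $t\le 1$.

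The main obstacle I anticipate is the bijection in the second paragraph — specifically, arguing that the collisions form a \emph{complete, unbroken} arithmetic progression. Establishing both directions requires care: the forward direction (collision differences are integer multiples of $(d_0,d_1)$) leans on the minimality clause of Lemma~\ref{main distance lemma}, while the converse (each in-range multiple is genuinely a collision) leans on the translation-invariance of matches; and one must confirm throughout that the relevant position differences stay below $M$ so that the reductions modulo $N$ cause no wraparound, which is exactly what lets the two coordinate bounds be added to recover $(t-1)d\le m-2$.
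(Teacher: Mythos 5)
Your proposal is correct and follows essentially the same route as the paper: reduce to counting collision pairs, observe that the difference of two collisions matches with respect to $(B_0,B_1)$ with both coordinates in $D$, and invoke Lemma~\ref{main distance lemma} so that in case~(\ref{small dist eq}) the collisions form an arithmetic progression with step $(d_0,d_1)$ while case~(\ref{large dist eq}) allows at most one collision. Your explicit inequality $(t-1)\cdot(|d_0|+|d_1|) \le (m_0-1)+(m_1-1)=m-2$ is in fact a cleaner and more complete rendering of the counting that the paper compresses into its ``it can be verified'' extremal-configuration claim.
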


\begin{proof}
If $m=1$, or more generally one of the prefixes is of length zero, then
the number of distinct symbol IDs is $m$.  Otherwise, neither prefix
is of length zero.  If there are no symbol IDs in common between the
two prefixes then the number of distinct symbol IDs is $m$.

Suppose there are symbol IDs in common between the two prefixes.
Then there is a symbol ID at some position $p_0$ with respect to $P_0$
that is the same as a symbol ID at some position $p_1$ with respect to $P_1$.

If (\ref{large dist eq}) of Lemma~\ref{main distance lemma} holds,
i.e., there is no $d_0 \in D$, $d_1 \in D$ such that $(d_0,d_1)$ matches
with respect to $(B_0,B_1)$, then the distance between 
$B_0$ and $B_1$ is $d=2 \cdot M$ and $(p_0,p_1)$ is the only pair 
of positions where the symbol IDs are the same among the two prefixes, 
and thus there are 
\[ m-1 = m - \left\lfloor \frac{m-2}{2 \cdot M} \right\rfloor - 1\] 
distinct symbol IDs.

If (\ref{small dist eq}) of Lemma~\ref{main distance lemma} holds,
i.e., there is a $d_0 \in D$, $d_1 \in D$ such that $(d_0,d_1)$ matches
with respect to $(B_0,B_1)$ and all other pairs $(d'_0,d'_1)$ that match with respect
to $(B_0,B_1)$, where $d'_0 \in D$, $d'_1 \in D$, are integer multiples of $(d_0,d_1)$.
Without loss of generality, $d_0 > 0$ and $d_1 > 0$ (the case where $d_0 > 0$ and
$d_1 < 0$ is similar).  Then the distance between $B_0$ and $B_1$ is $d = d_0 + d_1$.
The following maximizes the number of symbol IDs that are 
duplicates between $P_0$ and $P_1$:
\begin{itemize}
    \item The symbol ID in position $0$ of $P_0$ is the same as the symbol ID in
position $0$ of $P_1$.
    \item Let $i = \lfloor \frac{m-2}{d} \rfloor$. 
    Then set $m_0$ and $m_1$ such that $m_0 \ge i \cdot d_0 + 1$ and 
    $m_1 \ge i \cdot d_1 + 1$.  (The remaining $r = m - i \cdot d - 2$
    length can be assigned to $m_0$ and $m_1$ arbitrarily to satisfy $m_0+m_1=m$.)
\end{itemize}
It can be verified that the number of pairs of symbol IDs that are the same
between the two prefixes is maximized, and that this number of pairs is
\[ i+1 = \left\lfloor \frac{m-2}{d} \right\rfloor+1.\]
\qed
\end{proof}

\begin{corollary}
\label{overlap cor}
Suppose for $B_0,B_1,\ldots, B_{s-1} \in \{1,\ldots,N\}$
the distance between $B_i$ and $B_j$ is at least $d$ 
for all $i,j \in \{0,\ldots,s-1\}$ where $i \not= j$.  
Then, for any $A_0, A_1,\ldots, A_{s-1} \in \{0,\ldots,N\}$
the number of distinct symbol IDs in the prefixes
of the permutations defined by \SOPI s 
\[ P_0 = (A_0,B_0), P_1 = (A_1,B_1), \ldots, P_{s-1} = (A_{s-1},B_{s-1}),\] 
is at least
\[ m - (s-1) \cdot \left( \frac{m-s}{d} + \frac{s}{2} \right), \]
where $m \le M$ is the total length of the $s$ prefixes.
\end{corollary}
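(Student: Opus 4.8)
The plan is to reduce the $s$-SOPI statement to the pairwise bound of Lemma~\ref{overlap lemma} by a union-bound argument over the $\binom{s}{2}$ pairs. Write $m_i$ for the length of the prefix taken from $P_i$, so that $\sum_{i=0}^{s-1} m_i = m$, and regard the $m$ received symbols as $m$ tokens, each carrying a symbol ID. The number of distinct symbol IDs equals $m$ minus the total \emph{excess}, where an ID occurring $k \ge 1$ times among the tokens contributes $k-1$ to the excess. The goal is therefore to upper bound this excess.

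First I would bound the excess by the number of colliding token pairs. For a fixed ID occurring $k$ times, the number of unordered pairs of tokens sharing that ID is $\binom{k}{2}$, and since $\binom{k}{2} \ge k-1$ for every $k \ge 1$, summing over all IDs shows the total excess is at most the total number of colliding token pairs. Crucially, because each individual \SOPI\ defines a permutation of the symbol IDs, no two tokens drawn from the same prefix can collide; hence every colliding pair consists of one token from prefix $i$ and one from prefix $j$ with $i \ne j$, and the total number of colliding pairs is exactly $\sum_{i<j} c_{ij}$, where $c_{ij}$ denotes the number of common symbol IDs between the prefixes of $P_i$ and $P_j$.

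Next I would invoke Lemma~\ref{overlap lemma} pairwise. For the pair $(P_i,P_j)$ the combined prefix length is $m_i+m_j \le M$ and the distance is at least $d$; since the bound $\lfloor (m_i+m_j-2)/d \rfloor + 1$ is nonincreasing in the distance, the lemma gives $c_{ij} \le \lfloor (m_i+m_j-2)/d \rfloor + 1 \le (m_i+m_j-2)/d + 1$. Substituting this and using $\sum_{i<j}(m_i+m_j) = (s-1)\sum_i m_i = (s-1)m$ together with $\binom{s}{2} = s(s-1)/2$, the sum collects to $\sum_{i<j} c_{ij} \le (s-1)\bigl((m-s)/d + s/2\bigr)$, which gives the claimed lower bound $m - (s-1)\bigl((m-s)/d + s/2\bigr)$ on the number of distinct symbol IDs.

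The main obstacle is the passage from the pairwise guarantees to a single global bound: the naive hope that the number of distinct IDs is at least $m - \sum_{i<j} c_{ij}$ must be justified carefully, since an ID shared among three or more prefixes is counted several times in $\sum_{i<j} c_{ij}$. The inequality $\binom{k}{2} \ge k-1$ is precisely what makes this over-counting work in our favor, ensuring that the summed pairwise overlaps dominate the true excess rather than underestimate it. Everything after that reduction is routine arithmetic, and the degenerate cases (prefixes of length $0$ or $1$) are already absorbed by the upper bound, since Lemma~\ref{overlap lemma} yields zero collisions there while the formula remains nonnegative.
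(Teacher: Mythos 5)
Your proposal is correct and follows essentially the same route as the paper: apply Lemma~\ref{overlap lemma} to each of the $\binom{s}{2}$ pairs, bound $c_{ij} \le (m_i+m_j-2)/d + 1$, and sum. The only difference is that you explicitly justify (via $\binom{k}{2} \ge k-1$) why the sum of pairwise overlaps dominates the true loss in distinct IDs, a step the paper's proof leaves implicit.
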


\begin{proof}
Let $m_0,m_1,\ldots,m_{s-1}$ be the respective lengths of the $s$ prefixes,
where $m = \sum_{i=0}^{s-1} m_i$. From Lemma~\ref{overlap lemma},
for each $i\in \{0,\ldots,s-1\}$, 
$j \in \{ 0,\ldots,i-1\}$, the number of symbol IDs that are the same between
the permutation defined by the prefix of $P_i$ and $P_j$ is at most
\[\frac{(m_i-1)+(m_j-1)}{d}+1.\]  
Thus, the number of symbol IDs that are the same
between all pairs of the prefixes is at most
\begin{align}
& \left( \sum_{i=0}^{s-1} \frac{(s-1) \cdot (m_i-1)}{d}\right) + \frac{(s-1) \cdot s}{2} \\
& \le (s-1) \cdot \left( \frac{m-s}{d} + \frac{s}{2} \right). 
\end{align}
\qed
\end{proof}

Lemma~\ref{overlap lemma} and Corollary~\ref{overlap cor} provide worst case
lower bounds on the total number of distinct symbol IDs in prefixes.
The actual number of distinct symbol IDs in practice are much closer to
the total size $m$ of the prefixes than to the worst case lower bounds.

\subsection{Constructing a \SOPI\ set}

The following algorithm constructs a set $\mathcal{B}$ 
such that for $B_0 \in \mathcal{B}$, $B_1\in \mathcal{B} - \{B_0\}$,
the distance between $B_0$ and $B_1$ is at least $d$:
\begin{itemize}
    \item Initialize $\mathcal{B} = \emptyset$, $\mathcal{B}' = \{1,\ldots,N-1\}$.
    \item Repeat until $\mathcal{B}' = \emptyset$.
    \begin{itemize}
        \item Choose any $B \in \mathcal{B}'$
        \item Delete $B$ from $\mathcal{B}'$ and
        add $B$ to $\mathcal{B}$.
        \item For all $i \in D$, $j \in D$ such that $|i| + |j| < d$
        \begin{itemize}
            \item Delete $B' = i \cdot B \cdot j^{-1} \mod N$ from $\mathcal{B}'$.
        \end{itemize}
    \end{itemize}
\end{itemize}
Each time an element is added to $\mathcal{B}$, after accounting for
some symmetries such as 
\[i \cdot B \cdot j^{-1} \mod N = -i \cdot B \cdot (-j)^{-1} \mod N ,\]
the number of elements deleted from $\mathcal{B}'$ for each element added to
$\mathcal{B}$ is at most
\[(d-2)\cdot(d-3) +1 \le d^2. \] 
Thus, \[ |\mathcal{B}| \ge \frac{N-1}{d^2} \] 
when the algorithm completes.

The following algorithm constructs a set $\mathcal{A}_B$ 
for $B \in \mathcal{B}$ such that for $A_0 \in \mathcal{A}_B$, $A_1\in \mathcal{A}_B - \{A_0\}$, the symbol IDs of prefixes of length up to $M$ of the permutations defined by \SOPI\ $P_0 = (A_0,B)$ and $P_1 = (A_1,B)$ are all distinct.
as follows:
\begin{itemize}
    \item Initialize $\mathcal{A}_B = \emptyset$
    \item Choose $A$ randomly and add $A$ to $\mathcal{A}_B$.
    \item For $i = 1,\ldots,\lfloor N/M \rfloor -1$ do
    \begin{itemize}
        \item $A = A + M \cdot B \mod N$
        \item Add $A$ to $\mathcal{A}_B$.
    \end{itemize}
\end{itemize}
Thus, \[ |\mathcal{A}_B| \ge N/M-1 \] 
when the algorithm completes.

Overall, the set of \SOPI s
\[ \mathcal{P} = \{P = (A,B): B \in \mathcal{B}, A \in \mathcal{A}_B \} \]
is constructed. A \SOPI\ can be assigned to an \enode\ by choosing any 
$P=(A,B) \in \mathcal{P}$ that hasn't been assigned so far.
Overall,
\[ |\mathcal{P}| \ge \frac{N^2}{d^2 \cdot M}\]
\SOPI s can be assigned.

\subsection{Designed \SOPI\ set examples}

As an example, distance of at least $101$ between pairs of $B$ values
might be sufficient.  Lemma~\ref{overlap lemma} guarantees that there are at least
$K$ distinct symbol IDs among the prefixes of any pair of \SOPI s  
with $M = 1.01\cdot K+1$ symbol IDs in total.
The value of $M=30,000$ might be sufficient for practical use cases, 
where $M^2 \le N/2$ when $N = 2^{31}-1$.
In this case, the number of possible \SOPI s available is over $15$ billion.

As another example, distance of at least $1,000$ between pairs of $B$ values
might be desirable.  Corollary~\ref{overlap cor} guarantees that downloading from prefixes of up to $10$
\sobject s with different \SOPI s, where the total number of downloaded symbols
is at least $10,000$ and at most $30,000$, 
will have less than $1.5\%$ duplication in symbol IDs.
In this case, the number of possible \SOPI s available is over $150$ million.

\section{\SOPI\ distribution design}
\label{distribution sec}

It is not necessary that each \enode\ is assigned a unique \SOPI.
To enjoy the \SOPI\ design properties, it is sufficient that 
a \client\ avoids downloading \edata\ for an \object\ from 
two \enode s with the same assigned \SOPI, which the \client\ 
can easily avoid no matter how the \SOPI s are assigned to \enode s. 
For example, if a \client\ 
is supplied with the same \SOPI\ for two edge \enode s\ reachable
over different interfaces, the \client\ can simply send 
requests for \edata\ to only one of the two interfaces, 
and thus avoid receiving duplicate \edata.
Of course, the benefits of the \SOPI\ design are reduced in this
case, since the \client\ cannot effectively download \edata\ 
from both edge \enode s for the same \object. 

Thus, one would like to distribute \SOPI s to \enode s in such a
way that it is unlikely that a \client\ will receive 
the same \SOPI\ from two different edge \enode s.
One can conceptually create a graph of all \enode s in the internet, 
where there is an edge connecting two \enode s if it
is possible that a \client\ may download prefixes of \sobject s from both \enode s for the
same \object.  Then, one can think of each \SOPI\ $P \in \mathcal{P}$ as being a 
different color. Assigning colors to the graph in such a 
way that no edge has the same color at both endpoints 
provides a valid assignment of \SOPI s to \enode s.

It seems likely that such a graph can be colored using
a small number of colors, e.g., less than $100$, and perhaps
much less than $100$ if some edges in the graph as allowed to have
the same color.  Thus, the number of \SOPI s needed overall
is likely to be rather small.  This can ameliorate one of
the possible concerns with the \SOPI\ design, which is that
some information about which \client s are requesting
which \object s might be revealed by the requests for \edata\ 
containing \SOPI s percolating through the network: the amount
of information revealed is minimized if the number of \SOPI s
distributed is small.

\section{Large \object s}
\label{large sec}

If an \object\ is not too large then it can be treated as a single source block,
i.e., \fencoding\ and \fdecoding\ can be applied to the entire \object.
However, it becomes inefficient to apply \fencoding\ and \fdecoding\ to 
\object s that are larger, since typically the \object\ needs to fit into
working memory.  The required working memory is typically linear in the
source block size, i.e., the working memory to recover a block 
with $K$ source symbols is typically proportional to the \ssize\ times $K$.

Thus, for longer \object s, it is useful to have an algorithm that automatically
partitions the \object\ into multiple source blocks.  Such an algorithm is specified
in~\cite{RFC6330} in Section~4.3.  For \LDN\, the following is a suitable simplification 
of that algorithm.
Let the desired \ssize\ be $T$, and let $WS$ be the maximum
source block size for which \fencoding\ and \fdecoding\ is efficient,
where $\lfloor WS/T \rfloor \le 56,403$.
An \object\ $D$ of size $F$ can be automatically partitioned into source blocks
as follows:
\begin{itemize}
    \item $Kt = \lceil F/T \rceil$ (no. of source symbols in $D$).
    \item $\Kmax = \lfloor WS/T \rfloor$ (max no. symbols per source block).
    \item $Z = \lceil Kt / \Kmax \rceil$ (no. of source blocks to split $D$ into).
    \item $(KL, KS, ZL, ZS) = \mbox{Partition}[Kt, Z]$.
\end{itemize}
The output of Partition$[Kt, Z]$ is $(KL, KS, ZL, ZS)$, 
where $KL = \lceil Kt/Z \rceil$, $KS = \lfloor Kt/Z \rfloor$, 
$ZL = Kt - KS \cdot Z$, and $ZS = Z - ZL$.
The function Partition$[Kt,Z]$ partitions a
block of $Kt$ source symbols into $Z$ approximately equal-number of source symbols blocks.  
More specifically, it partitions $Kt$ into $ZL$ blocks, each with $KL$ source symbols, 
and $ZS$ blocks, each with $KS$ source symbols. 

Given the \ssize\ $T$, the maximum source block size $WS$, 
and the \dobjsz\ $F$, an \enode\ or \client\ can determine
the parameters $(KL, KS, ZL, ZS)$ as described above,
where $Z = ZL + ZS$ is the total number of source blocks.
If $Z=1$ then the \object\ is considered as a single source block
with $KL = K = \lceil F/T \rceil$ source symbols.

If $Z>1$ then \object\ $D$ is partitioned into $Z$ source blocks
as follows:
\begin{itemize}
    \item The initial portion of \object\ $D$ 
    of size $ZL \cdot KL \cdot T$ 
    is partitioned into $ZL$ source blocks,
    each consisting of $KL$ source symbols of size $T$.
    \item The remaining portion of \object\ $D$ of size $F - ZL \cdot KL \cdot T$ is partitioned into $ZS$ source blocks, each consisting of
    $KS$ source symbols of size $T$.
\end{itemize}

\subsection{Large \object\ extension of \SOPI}
\label{large sopi sec}

We extend the definitions of \SOPI\ and \sobject\  
to be applicable to large \object s, i.e., \object s that
are partitioned into $Z > 1$ source blocks.  The source block
structure for a large \object\ $D$ of size $F$ is determined as
described in Section~\ref{large sec} based on $F$, \ssize\ $T$
and a maximum source block size $WS$.

A \SOPI\ for the large \object\ $D$ is of the form $P = (A,B,C,D)$,
where $A \in \{0,1,\ldots,N-1\},$ $B \in \{1, 2, \ldots, N-1\}$,
$C \in \{ 0, 1, \ldots, N-1 \}$ and $D \in \{1, 2, \ldots, N-1\}$.
$A$ and $B$ are used similarly to Subsection~\ref{stream sec}
to identify a symbol generated from a source block,
and $C$ and $D$ are used to identify one of the $Z$ source blocks.
For convenience, the ranges of $C$ and $D$ are chosen to match
the ranges of $A$ and $B$, and thus large \object s with up to
$N-1$ source blocks can be supported.
For any $i \in \{ 0, 1, \ldots, Z \cdot (N-1) \}$, let:
\begin{itemize}
    \item $r = \lfloor i/Z \rfloor$
    \item $j = (A + r \cdot B) \mod N$
    \item $j' = (i + C + r \cdot D) \mod Z$
\end{itemize}
Then the symbol at position $i$ of the \sobject\ identified by \SOPI\ $P$ for \object\ $D$
is the symbol identified by $j$ from the source block identified by $j'$.

Each \SOPI\ $P$ defines a \sobject\ $(P,D)$, which
is a permutation defined by $P$ of the $Z \cdot N$ possible symbols that can
be generated for the $Z$ source blocks of the \object\ $D$. The permutation $P$ has the
property that each set of $Z$ consecutive symbols of the \sobject\ starting at a position
that is a multiple of $Z$ all have the same symbol ID but are from different source blocks,
i.e., the $Z$ consecutive symbols are from a cyclic shift of the $Z$ source blocks.

The values of $C$ and $D$ should be chosen randomly chosen.
This ensures that the source block sequence within each set 
of $Z$ consecutive symbols of the \sobject\ is a random cyclic
shift, and that different pairs of cyclic shifts are random.
This ensures that packet losses on average with respect 
to randomly chosen $C$ and $D$ affect each source block of a large
\object\ equally.  This is true even if packet loss patterns 
depend on the position of the symbol within the \sobject\ carried
in a packet, as long as the packet loss does not depend on the
values of $C$ and $D$.

\subsection{RaptorQ large \object}
\label{large raptorq sec}

For the RaptorQ code specified in~\cite{RFC6330}, the maximum number of
supported source symbols is $56,403$, and thus the maximum supported source block
size $WS$ should satisfy $\lfloor WS/T \rfloor \le 56,403$, where $T$ is the \ssize.
With $T=1400$ bytes, i.e., around the typical IPv4 packet payload size, 
the maximum \dobjsz\ supported without partitioning the \object\ into more than one 
source block is around $80$ Mbytes.

For RaptorQ~\cite{RFC6330}, $N=2^{31}-1$ is a good choice, and thus
up to $2^{31}-1$ source blocks can be supported with the design described in
Section~\ref{large sopi sec}.  With $T$
set to the maximum supported \ssize\ of $2^{16} = 65,536$ bytes, 
the maximum size \object\ supported is around $8 \cdot 10^{18}$ bytes,
or around $8$ Exabytes.



\begin{thebibliography}{10}
\providecommand{\url}[1]{#1}
\csname url@samestyle\endcsname
\providecommand{\newblock}{\relax}
\providecommand{\bibinfo}[2]{#2}
\providecommand{\BIBentrySTDinterwordspacing}{\spaceskip=0pt\relax}
\providecommand{\BIBentryALTinterwordstretchfactor}{4}
\providecommand{\BIBentryALTinterwordspacing}{\spaceskip=\fontdimen2\font plus
\BIBentryALTinterwordstretchfactor\fontdimen3\font minus
  \fontdimen4\font\relax}
\providecommand{\BIBforeignlanguage}[2]{{%
\expandafter\ifx\csname l@#1\endcsname\relax
\typeout{** WARNING: IEEEtranS.bst: No hyphenation pattern has been}%
\typeout{** loaded for the language `#1'. Using the pattern for}%
\typeout{** the default language instead.}%
\else
\language=\csname l@#1\endcsname
\fi
#2}}
\providecommand{\BIBdecl}{\relax}
\BIBdecl


\bibitem{byersluby2020ldn}
J. Byers, M.~Luby.
\newblock Liquid Data Networking.
\newblock  7th ACM Conference on Information-Centric Networking (ICN ’20), Virtual Event, Canada. ACM, New York, NY, USA,
7 pages. 
\newblock https://doi.org/10.1145/3405656.3418710.

\bibitem{dfapproach}
J. Byers, M.~Luby, M. Mitzenmacher, A. Rege. 
\newblock A Digital Fountain Approach to Reliable Distribution of Bulk Data. 
\newblock Proc. of ACM SIGCOMM. Comput. Commun. Rev., vol. 23, no. 4, pp. 56-67, 1998.

\bibitem{RFC6330}
M.~Luby, A.~Shokrollahi, M.~Watson, T.~Stockhammer, L.~Minder. 
\newblock IETF RFC6330. 
\newblock RaptorQ Forward Error Correction Scheme for Object Delivery.
Internet Engineering Task Force, August 2011.

\bibitem{RaptorQpap}
A.~Shokrollahi, M.~Luby. 
\newblock Raptor Codes. 
\newblock Foundations and Trends in Communications and Information Theory, Now Publishers, vol. 6, no. 3-4, pp. 213-322, 2011.

\bibitem{codornices}
L. Minder, M.~Luby, P. Aggarwal. 
\newblock Codornices project website describing optimized implementation of RaptorQ
\newblock https://www.codornices.info, 2020.

\end{thebibliography}
\end{document}